\newtheorem{theorem}{Theorem}
\newtheorem{corollary}[theorem]{Corollary}
\newtheorem{lemma}[theorem]{Lemma}
\newtheorem{definition}[theorem]{Definition}
\newtheorem{remark}[theorem]{Remark}
\newtheorem{example}[theorem]{Example}
\begin{document}

\preprint{APS/123-QED}

\title{Transforming Collections of Pauli Operators into Equivalent Collections of Pauli Operators over Minimal Registers}% for Trotterization, Variational Quantum Eigensolvers, and Other Uses}% Force line breaks with \\
%\thanks{A footnote to the article title}%

\author{Lane G. Gunderman}
 \email{lgunderman@uwaterloo.ca}
\affiliation{%
 The Institute for Quantum Computing, University of Waterloo, Waterloo, Ontario, N2L 3G1, Canada
}%
\affiliation{Department of Physics and Astronomy, University of Waterloo, Waterloo, Ontario, N2L 3G1, Canada}
%\author{Andrew Jena, maybe--or in acknowledgements for accidentally suggesting the idea}
%\affiliation{The Institute for Quantum Computing, University of Waterloo, Waterloo, Ontario, N2L 3G1, Canada}
%\affiliation{Department of Combinatorics and Optimization, University of Waterloo, Waterloo, Ontario, N2L 3G1, Canada}

\date{\today}

\begin{abstract}
Transformations which convert between Fermionic modes and qubit operations have become a ubiquitous tool in quantum algorithms for simulating systems. Similarly, collections of Pauli operators might be obtained from solutions of non-local games and satisfiability problems. Drawing on ideas from entanglement-assisted quantum error-correcting codes and quantum convolutional codes, we prove the obtainable lower-bound for the number of qubits needed to represent such Pauli operations which are equivalent and provide a procedure for determining such a set of minimal register Pauli operations.
\end{abstract}

\maketitle

The initial idea for a quantum computer was argued as a tool for simulating quantum systems \cite{feynman2018simulating}. This problem was advanced by Lloyd proposing a method for actually simulating quantum systems by approximating it with products of operators \cite{lloyd1996universal}. Unfortunately such products have very poor error-scaling unless higher order product series are used such as those of Suzuki \cite{suzuki1990fractal,suzuki1991general}. Additionally, this seminal work by Lloyd did not specify what sort of operations would be used in the product formula. The simplest building block operations for quantum devices are the Pauli operators, so if the system can be simulated using these operations the feasibility of utilizing some sort of near-term advantage would be improved, providing a useful algorithm for NISQ (noisy intermediate-scale quantum) devices \cite{preskill2018quantum}. A variety of transformations have been devised for transforming Hamiltonians for quantum systems, such as molecules or field-theories, into Pauli operators with associated weights \cite{bravyi2002fermionic,tranter2018comparison,mcclean2014exploiting,mcclean2020openfermion}. Well controlled qubits with long coherence time are a precious resource, especially in more near-term quantum devices. These all perform reasonably well in this regard, however, this work solves one open question related to these transformations: for a particular collection of Pauli operators obtained from these transformations, or from some other setting, what is the minimal number of qubits needed in order to perform such operations? Our answer is found from quantum error-correction techniques. Additionally, our solution finds the obtainable lower-bound and also provides an efficient methodology for transforming the original collection of Pauli operators into an equivalent collection using the minimal number of qubits. Note that this work has more general applications--wherever Pauli operators appear (from quantum complexity theory to non-local games to fault-tolerance to communication protocols)--but we focus here on more near-term algorithms related to simulating quantum systems.

\section{Background Definitions}

Formally, quantum computers operate by performing unitary operators. The rudimentary operations for qubits are the Pauli operators, which act as follows on the computational basis states $\{|0\rangle,|1\rangle\}$:
\begin{equation}
    X|j\rangle=|j\oplus 1\rangle,\quad Z|j\rangle =(-1)^j|j\rangle,\quad j\in \{0,1\}.
\end{equation}
There is one final non-trivial operator, $Y$, which we take as the product of $X$ and $Z$ \footnote{While $XZ= -ZX$ and $Y=iXZ$, for the sake of our arguments it suffices to note that $Y$ is just some scalar times the product}. \if{false}As matrices these are explicitly:
\begin{equation}
    X=\begin{bmatrix}
    0 & 1\\
    1 & 0
    \end{bmatrix},\ Y=\begin{bmatrix}
    0 & -i\\
    i & 0
    \end{bmatrix},\ Z=\begin{bmatrix}
    1 & 0\\
    0 & -1
    \end{bmatrix}.
\end{equation}\fi
The $n$-fold tensor product of Pauli operators is indicated by $\mathbb{P}^n$.

Although the Pauli operators form a basis, working with these non-commuting operators is rather cumbersome and so we utilize the symplectic representation for the Pauli operators to transform the tool kit from those of non-commuting operator theory to those of linear algebra over $\mathbb{Z}_2$ \cite{nielsen2002quantum,ketkar2006nonbinary}.

\begin{definition}
The binary symplectic representation of a Pauli operator is the homomorphic map, $\phi$, which maps $\mathbb{P}^n\mapsto \mathbb{Z}_2^{2n}$. It takes the $i$-th register in the Pauli and sets the $i$-th position in the binary vector to the power of the $X$ operator at that position and the $i+n$-th position to the power of the $Z$ operator. In short, this performs:
\begin{equation}
    \phi \left(\bigotimes_{t=1}^n X^{a_t}Z^{b_t}\right)=\left(\bigoplus_{t=1}^n a_t\right)\bigoplus \left(\bigoplus_{t=1}^n b_t\right).
\end{equation}
This mapping observes:
\begin{equation}
    \phi(s_i\circ s_j)=\phi(s_i)\oplus \phi(s_j)
\end{equation}
where $s_i$ and $s_j$ are two Paulis in $\mathbb{P}^n$ and $\oplus$ is componentwise vector addition mod two.
\end{definition}

\begin{definition}
The commutator between two Pauli operators $\phi(p_1)=(\vec{a}_1|\vec{b}_1)$ and $\phi(p_2)=(\vec{a}_2|\vec{b}_2)$ is computed as:
\begin{equation}
    \phi(p_1)\odot \phi(p_2)=\vec{a}_1\cdot \vec{b}_2-\vec{b}_2\cdot \vec{a}_2 \mod 2
\end{equation}
This is $0$ if $p_1$ and $p_2$ commute, and $1$ if not (they anti-commute).
\end{definition}

\begin{definition}
For a collection of Pauli operators $\mathcal{P}$ we denote by $rank_\phi(\mathcal{P})$ the number of compositionally independent generators for $\mathcal{P}$, we call this the $\phi$-rank of $\mathcal{P}$.
\end{definition}

The $\phi$-rank of $\mathcal{P}$ is simply the rank of the symplectic representation of $\mathcal{P}$. Since there will be a second rank of importance in this work, we have made this separate distinction.

\section{Auxiliary Results}

In this section we begin by providing an example illustrating the result of this work, then state the main result, following it by proving smaller results in this section before showing the main result in the next.

\begin{example}
Consider the following pair of operators $\{XX,IZ\}$ as a motivating example. These two operators anti-commute and the rank of their symplectic form is $2$, however, a shorter set of Pauli operators which also obey these relations is $X, Z$. So the number of physical registers needed can be reduced for these operators from $2$ to just $1$. This work considers this problem and shows the optimal reduction in the number of registers for collections of Pauli operators so that the same commutation relations are preserved.
\end{example}

\begin{definition}
Supplementary Pauli operators are a collection of Pauli operators, $\mathcal{P}^\cup$, associated with each Pauli in $\mathcal{P}$ such that $\mathcal{P}^\cup\otimes_r \mathcal{P}$ has all operators commuting, where $\otimes_r$ is the tensor product of element $i$ from $\mathcal{P}^\cup$ with element $i$ from $\mathcal{P}$.
\end{definition}

\begin{example}
Consider again the prior example, then we have $\mathcal{P}=\{XX,IZ\}$ and $\mathcal{P}^\cup=\{X,Z\}$ with $\mathcal{P}^\cup\otimes_r \mathcal{P}=\{XXX,ZIZ\}$.
\end{example}

\begin{remark}\label{commeq}
Note that $\phi(\mathcal{P}_i)\odot \phi(\mathcal{P}_j)=\phi(\mathcal{P}^\cup_i)\odot \phi(\mathcal{P}^\cup_j)$ since we are working in the qubit, binary, case. Generally $\phi(\mathcal{P}_i)\odot \phi(\mathcal{P}_j)+\phi(\mathcal{P}^\cup_i)\odot \phi(\mathcal{P}^\cup_j)=0$, with $+$ being standard addition mod the local-dimension value \cite{ketkar2006nonbinary}.
\end{remark}

Given this, we will focus on finding a set of $\mathcal{P}^\cup$ that has a minimal number of registers. This problem is exactly that which entanglement-assisted quantum error-correcting codes needed to solve--finding a set of supplemental Paulis to act on pre-shared entanglement to resolve non-commuting generators for the codes \cite{brun2006correcting,hsieh2008entanglement,bennett1996mixed}. This same problem also appears in determining the number of registers that must be carried over between frames in quantum convolutional codes \cite{houshmand2012minimal}. Fortunately, this problem has been studied and been solved, although such solutions do not provide a method for determining these supplementary Paulis \cite{wilde2008optimal,houshmand2012minimal}. For these works the primary concern was generating Pauli operators that with their supplementary Pauli operators pairwise commuted, whereas here we focus on  ensuring these collections have the same commutation relationships. The universality of this result from EAQECC and quantum convolutional codes is brought forth in this work, showing that the utility of this method extends beyond error-correction but can also play a central role in improving some simulation based quantum algorithms and any place where stabilizers are a powerful tool. Additionally, as it provides a tight lower-bound for the Paulis obtained from a fermionic to qubit transformation, it shows that further reduction is not possible, so other avenues must be considered, such as clique formation \cite{verteletskyi2020measurement, jena2019pauli, shlosberg2021adaptive, paulson2021simulating, van2020circuit, gokhale2019minimizing} or randomized Trotterization \cite{campbell2019random,childs2019faster} or some other method which maps the Hamiltonian to Pauli operators.

Our primary result is then the following Theorem, which we will prove following auxiliary Lemmas and associated Definitions for these Lemmas.

\begin{theorem}\label{mainresult}
Given a collection of Pauli operators with associated weights on $n$ registers, equivalent Pauli operators with the same $\phi$-rank can be efficiently determined which obtain the minimal number of registers needed.
\end{theorem}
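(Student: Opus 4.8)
The plan is to recast the entire problem as one about alternating bilinear forms over $\mathbb{Z}_2$ and then read off the minimal register count from a single canonical decomposition. First I would stack the symplectic images $\phi(\mathcal{P}_i)$ as the rows of a binary matrix $M \in \mathbb{Z}_2^{m \times 2n}$, so that $k := rank_\phi(\mathcal{P})$ is the rank of $M$ and the complete commutation data of $\mathcal{P}$ is encoded in the \emph{Gram matrix} $\Omega = M\,J\,M^{\mathsf T}$, where $J$ is the standard symplectic form and the $(i,j)$ entry is $\phi(\mathcal{P}_i)\odot\phi(\mathcal{P}_j)$. By Remark~\ref{commeq} two collections are ``equivalent'' precisely when they share the same $\Omega$ and the same rank $k$; hence the theorem reduces to the statement: find the least $n'$ admitting a matrix $M' \in \mathbb{Z}_2^{k\times 2n'}$ of rank $k$ with $M' J' M'^{\mathsf T} = \Omega$ (here $J'$ is the form on $2n'$ coordinates), together with an efficient way to produce such an $M'$.

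Next I would invoke the structure theorem for alternating forms. Since $\Omega$ is antisymmetric with zero diagonal over $\mathbb{Z}_2$ its rank is even, say $\mathrm{rank}(\Omega) = 2c$, and symplectic Gram--Schmidt supplies an invertible $R$ over $\mathbb{Z}_2$ with $R\,\Omega\,R^{\mathsf T}$ in canonical form: $c$ hyperbolic blocks $\left(\begin{smallmatrix}0&1\\1&0\end{smallmatrix}\right)$ together with a zero block of size $s = k-2c$. Replacing $M$ by $RM$ only re-chooses the generating set, since then $RM\,J\,(RM)^{\mathsf T} = R\Omega R^{\mathsf T}$, preserving both the equivalence class and the $\phi$-rank, so this costs nothing. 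The $c$ blocks isolate the genuinely anticommuting pairs (the same object EAQECC theory counts as preshared ebits), and the $s$-dimensional kernel is the radical of generators commuting with everything.

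For the upper bound I would give the explicit, polynomial-time construction: assign each hyperbolic pair its own register, realized as $X_j$ and $Z_j$, and assign each of the $s$ radical generators a fresh register realized as a single $Z$. This yields a rank-$k$ matrix $M'$ on $n' = c + s = k - c$ registers reproducing the canonical $\Omega$; undoing the change of basis $R^{-1}$ returns the original commutation table. Because the whole pipeline is Gaussian elimination and symplectic Gram--Schmidt over $\mathbb{Z}_2$, it runs in time polynomial in $m$ and $n$, which settles the efficiency claim.

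The main obstacle is the matching lower bound, i.e.\ showing $n' \ge k-c$ for \emph{every} valid realization, not merely the canonical one. Here I would argue invariantly using the radical. In any realization let $\bar V$ be the row space of $M'$ and $\bar R = \bar V \cap \bar V^{\perp}$ its radical, with $\dim \bar R = s$ and $\dim \bar V = k$ both fixed by $\Omega$. Since $\bar R$ is totally isotropic and $\bar V \subseteq \bar R^{\perp}$, passing to the symplectic reduction $\bar R^{\perp}/\bar R$ --- a nondegenerate symplectic space of dimension $2n'-2s$ --- embeds $\bar V/\bar R$ as a nondegenerate subspace of dimension $2c$. A subspace cannot exceed its ambient dimension, forcing $2n' - 2s \ge 2c$, i.e.\ $n' \ge c + s = k-c$. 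Combined with the construction this pins the minimum at exactly $k-c$, and the care needed is to confirm that $k$ and $\mathrm{rank}(\Omega)=2c$ are the only invariants of the equivalence class, so that the bound genuinely applies to all competing realizations.
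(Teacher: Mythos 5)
Your construction is correct and, for the upper bound, is essentially the paper's own: form the alternating Gram matrix $\Omega$ (the paper's commutation matrix $M(\mathcal{P})$), reduce it by a congruence $R\,\Omega\,R^{\mathsf T}$ to hyperbolic blocks plus a zero block, realize each hyperbolic pair as $X_j,Z_j$ on its own register and each radical generator as a single $Z$ on a fresh register, then undo the change of basis. This is exactly Lemma~\ref{algo}, where the congruence is carried out by \emph{nearly} standard Gaussian elimination (simultaneous row/column additions, justified by Lemma~\ref{invar}) rather than by the symplectic Gram--Schmidt you invoke; over $\mathbb{Z}_2$ these are interchangeable, and the paper explicitly advertises avoiding SGSOP as a feature. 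Where you genuinely depart from the paper is the lower bound $n'\ge k-c$: the paper does not prove it, but imports the count $dim(M)-\frac{1}{2}rank(M)$ from the quantum convolutional code literature \cite{houshmand2012minimal,wilde2008optimal}. Your radical argument --- in any realization the row space $\bar V$ has dimension $k$ and radical $\bar R=\bar V\cap\bar V^{\perp}$ of dimension $s=k-2c$, both determined by $\Omega$, and $\bar V\subseteq\bar R^{\perp}$ forces $k\le 2n'-s$, i.e.\ $n'\ge k-c$ --- is correct and makes the optimality claim self-contained, a real gain in completeness over the paper's citation. One remark: your closing worry about confirming that $k$ and $\mathrm{rank}(\Omega)$ are ``the only invariants'' is unnecessary. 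Equivalence is \emph{defined} by sharing the commutation table $\Omega$ and the $\phi$-rank $k$, so every competing realization has those data by hypothesis, and your dimension count applies to it verbatim; nothing further needs to be checked.
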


In order to reduce this problem, we may consider just a basis subset from the collection of Pauli operators $\mathcal{P}$:

\begin{lemma}\label{bases}
It suffices to find generators for the smallest subgroup that contains all of the given Pauli operators, $\mathcal{P}$.
\end{lemma}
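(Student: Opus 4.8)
The plan is to treat the whole problem as linear algebra over $\mathbb{Z}_2$ and to exploit the fact that both the group structure and the commutation data of $\mathcal{P}$ are already fixed by any generating set. First I would pass from the collection $\mathcal{P}$ to the group $G=\langle \mathcal{P}\rangle$ it generates under composition $\circ$. Because $\phi$ is a homomorphism sending $\circ$ to $\oplus$, the image $\phi(G)$ is exactly the $\mathbb{Z}_2$-linear span of $\{\phi(p):p\in\mathcal{P}\}$, a subspace $V\subseteq \mathbb{Z}_2^{2n}$ of dimension $r=rank_\phi(\mathcal{P})$. I would then fix a set $B=\{b_1,\dots,b_r\}$ of $r$ compositionally independent generators, so that $\{\phi(b_1),\dots,\phi(b_r)\}$ is a basis of $V$.

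The key observation is that the commutation form $\odot$ is bilinear over $\mathbb{Z}_2$. Writing each $\phi(p)$ uniquely as $\phi(p)=\bigoplus_{i\in S_p}\phi(b_i)$ for a suitable index set $S_p$, bilinearity gives
\begin{equation}
    \phi(p)\odot\phi(q)=\bigoplus_{i\in S_p}\bigoplus_{j\in S_q}\bigl(\phi(b_i)\odot\phi(b_j)\bigr).
\end{equation}
Hence every pairwise commutation relation in $\mathcal{P}$ is a fixed $\mathbb{Z}_2$-combination of the relations among the generators in $B$; the entire table of $\odot$-values on $\mathcal{P}$ is determined by its restriction to $B$.

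I would then establish the reduction in both directions. Any representation of $\mathcal{P}$ on fewer registers that preserves the $\phi$-rank and all commutation relations restricts, in particular, to such a representation of the subset $B$, so the minimal register count for $B$ cannot exceed that for $\mathcal{P}$. Conversely, suppose we have solved the problem for $B$, obtaining a collection $B'=\{b'_1,\dots,b'_r\}$ on the minimal number of registers with the same $\phi$-rank $r$ and with $\phi(b'_i)\odot\phi(b'_j)=\phi(b_i)\odot\phi(b_j)$ for all $i,j$. Forming, for each $p\in\mathcal{P}$, the corresponding composition of the $b'_i$ over $i\in S_p$ yields operators living on the same registers as $B'$, and the homomorphism property together with the displayed bilinear identity shows that their pairwise commutators and their composition (linear-dependency) relations agree with those of $\mathcal{P}$. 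Thus the minimal register count for $\mathcal{P}$ coincides with that for $B$, and solving the problem on a generating set solves it for all of $\mathcal{P}$.

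The main obstacle I anticipate is not the algebra but pinning down the right notion of \emph{equivalent}, so that preserving the $\odot$-data on a basis together with the $\phi$-rank genuinely captures everything that must be preserved for $\mathcal{P}$. Once equivalence is identified with preservation of this bilinear form and the composition relations, the homomorphism and bilinearity properties make the reduction automatic, and the only remaining care is verifying that reconstructing the non-generating elements as compositions of $B'$ introduces no new registers and leaves the $\phi$-rank unchanged.
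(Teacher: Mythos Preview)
Your proposal is correct and follows essentially the same approach as the paper: both arguments rest on the homomorphism property of $\phi$ (composition becomes $\oplus$) together with the bilinearity of the symplectic form $\odot$, so that the commutation data of all of $\mathcal{P}$ are determined by those of a generating set. Your write-up is considerably more explicit than the paper's terse version, in particular spelling out the two-direction reduction and the reconstruction of non-generating elements, but the underlying mechanism is identical.
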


\begin{proof}
Let $\mathcal{P}\subseteq S$ and $s$ be a set of generators for $S$ such that $s\subseteq \mathcal{P}$. Since the homomorphism $\phi$ carries composition as addition and the symplectic product is linear in the $\phi$ representations for Pauli operators, the elements of $\mathcal{P}$ that are not in $s$ can be generated from composition combinations of elements in $s$. Thus to fully specify $\mathcal{P}$, $s$ suffices--the weights associated with each operator must be tracked separately, however, these determine the full set of commutation relations that the collection of Pauli operators must satisfy.
\end{proof}

\begin{remark}
This result holds for a collection of supplementary Pauli operators, $\mathcal{P}^\cup$.
\end{remark}

Note that since we wish to also satisfy the commutation relations also for compositions of the generators, we require $rank_\phi(\mathcal{P})=rank_\phi(\mathcal{P}^\cup)$.

\begin{definition}
The commutation matrix for a collection of Pauli operators $\mathcal{P}$, $M(\mathcal{P})$, has entries $[M(\mathcal{P})]_{ij}=\phi(\mathcal{P}_i)\odot \phi(\mathcal{P}_j)$ for basis members of $\mathcal{P}$.
\end{definition}

From Remark \ref{commeq}, we know that $M(\mathcal{P})=M(\mathcal{P}^\cup)$.

\begin{lemma}\label{invar}
The commutator matrix's rank for a collection of Paulis $\mathcal{P}$, $rank(M(\mathcal{P}))$, is invariant under the choice of basis elements.
\end{lemma}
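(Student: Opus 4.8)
The plan is to exhibit the commutation matrix as a congruence transform of the fixed symplectic form and then invoke the elementary fact that congruence by invertible matrices preserves rank over $\mathbb{Z}_2$. First I would gather the $\phi$-representations of a chosen basis $\mathcal{P}_1,\dots,\mathcal{P}_k$, with $k=rank_\phi(\mathcal{P})$, as the rows of a single $k\times 2n$ matrix $G$ over $\mathbb{Z}_2$. Writing $\Lambda=\begin{pmatrix}0 & I_n\\ I_n & 0\end{pmatrix}$ for the matrix that implements $\odot$ (symmetric here because $-1=1$ over $\mathbb{Z}_2$), the defining entries $[M(\mathcal{P})]_{ij}=\phi(\mathcal{P}_i)\odot\phi(\mathcal{P}_j)$ assemble into the single identity $M(\mathcal{P})=G\,\Lambda\,G^{T}$.

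Next I would track how $G$ responds to a change of basis. Since $\phi$ carries composition to addition mod two, replacing one basis of the subgroup generated by $\mathcal{P}$ by another expresses each new generator's $\phi$-image as a $\mathbb{Z}_2$-linear combination of the old images; as both families are bases of the same $k$-dimensional subspace of $\mathbb{Z}_2^{2n}$, the transition is realized by an invertible $k\times k$ matrix $A$ over $\mathbb{Z}_2$, so that the new row matrix is $G'=AG$. Substituting into the identity above gives $M'=G'\,\Lambda\,(G')^{T}=A\,(G\,\Lambda\,G^{T})\,A^{T}=A\,M(\mathcal{P})\,A^{T}$, exhibiting the two commutation matrices as congruent.

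Finally, because $A$ and therefore $A^{T}$ are invertible, neither left- nor right-multiplication by them alters the rank, and hence $rank(M')=rank\!\left(A\,M(\mathcal{P})\,A^{T}\right)=rank(M(\mathcal{P}))$, which is exactly the asserted invariance.

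The step I expect to require the most care is the middle one: justifying that any two bases are linked by an invertible matrix at the level of $\phi$-images. This hinges on $\phi$ being a homomorphism onto a $\mathbb{Z}_2$-vector space together with the fact that a basis of $\mathcal{P}$ has precisely $rank_\phi(\mathcal{P})$ elements, so its $\phi$-images are linearly independent and span the full row space; I would also note that the scalar weights attached to the Paulis are irrelevant to this argument, since $\odot$ depends only on the $X$- and $Z$-powers recorded by $\phi$. Everything else reduces to standard linear algebra over $\mathbb{Z}_2$.
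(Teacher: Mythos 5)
Your proof is correct, but it takes a genuinely different route from the paper's. The paper argues operationally: a change of basis is decomposed into individual compositions of generators, each of which acts on $M$ as a \emph{simultaneous} addition of row $i$ and column $i$ to row $j$ and column $j$; the proof then needs the observation that such a simultaneous addition equals a row addition followed by a column addition \emph{plus} the addition of $[M]_{ii}$ to $[M]_{jj}$, and that this correction vanishes because the diagonal of $M$ is zero (every Pauli commutes with itself). Rank invariance then follows because these are elementary operations. You instead work globally: writing $M(\mathcal{P})=G\,\Lambda\,G^{T}$ with $G$ the matrix of $\phi$-images and $\Lambda$ the (symmetric, over $\mathbb{Z}_2$) form implementing $\odot$, you realize any basis change as $G'=AG$ with $A$ invertible, so $M'=A\,M\,A^{T}$ is congruent to $M$ and has the same rank. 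Your version is cleaner in two respects: it sidesteps the diagonal-correction subtlety entirely (the congruence computes all row and column operations at once), and it handles an arbitrary invertible transition matrix in one step rather than implicitly factoring it into elementary operations. What the paper's approach buys in exchange is exactly the machinery reused in Lemma \ref{algo}: the step-by-step row/column compositions are what the constructive Gaussian-elimination procedure iterates to produce the factorization $M=L\Tilde{D}L^{T}$ and, from it, the explicit supplementary Paulis. So your argument is the better standalone proof of the invariance statement, while the paper's is tailored to set up the algorithm that follows; it would be worth noting that your Gram-matrix identity $M=G\,\Lambda\,G^{T}$ also gives an immediate conceptual reason why $M$ is symmetric with zero diagonal, facts the paper invokes without displaying them in this form.
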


\begin{proof}
%\todo[noline,color=yellow]{This proof needs some work}
Let us take some note on how composing generators alters the commutator matrix $M$. The composition of generator $i$ onto generator $j$, written as $i\circ j$, results in the simultaneous addition of row $i$ in $M$ to row $j$, as well as the addition of column $i$ to column $j$ in $M$. This is evident from the homomorphism rule of composition being equivalent to addition of vectors as well as addition in the commutators.

Generally, the simultaneous addition row and column $i$ to $j$ is equal to the row addition, then the column addition, then the addition of $[M]_{ii}$ to $[M]_{jj}$. These commutativity matrices are anti-symmetric (including the qudit case) and have diagonals that are all zero, as each operator trivially commutes with itself, which ensures that the recipient diagonal entry is left as zero. Then for these matrices the simultaneous addition of rows and columns is equal to row addition followed by column addition. These are elementary matrix operations so they will preserve the rank.
\end{proof}

\section{Determining a valid set of minimal register operators}

As the commutation relations between a collection of Pauli operators determines how the members of this weighted collection interact with each other, finding any collection of Pauli operators which satisfy the same commutation relations will perform the same operations as the original collection. This means that by processing the collection of Paulis before running the experiment we may reduce the number of registers needed in the experiment, then upon completing said experiment covert the results back in terms of the original operators and in turn the original Hamiltonian. Alternatively, it states that one can use whichever method one desires to determine a collection of Pauli operators which satisfy the commutativity requirements then apply this method to compress those operators into ones which utilize the fewest qubits.

In a prior work it was found that the size of the minimal-memory encoder for a quantum convolutional code $S$ was given by $dim(M(S))-\frac{1}{2}rank(M(S))$ \footnote{Note that the rank is computed over mod $2$, as all operations are performed over this field.} \cite{houshmand2012minimal}, which is equivalent to the number of registers needed in our supplementary Paulis. This formula can be derived by breaking the commutation matrix into its isotropic portion (similar to a null space) and symplectic portions (a direct sum of rank $2$ matrices). While this formula provided the number of qubits needed, it did not provide any method for obtaining the memory operators themselves. Algorithm 1 from \cite{houshmand2012minimal} provides a procedure for shortening the size of the memory operators, effectively removing any additional memory operators needed that may have been included due to composition of delayed generators, however that is not relevant in this work. While a useful tool, these results do not provide a methodology for generating a set of valid memory operators. The following Lemma provides such a constructive method, which is of particular use here. We will use this Lemma in the proof of Theorem \ref{mainresult}.

%IS THIS JUST A FAR SIMPLER PROOF OF THE RESULT TOO? NOT NEEDING SYMPLECTIC MATH, JUST GAUSSIAN ELIMINATION--Symplectic Gram-Schmidt Orthonalization Procedure (SGSOP).

\begin{lemma}\label{algo}
There is a deterministic, efficient procedure for determining a valid set of supplementary Pauli operators $\mathcal{P}^\cup$ using the minimal number of registers for a given collection of Pauli operators $\mathcal{P}$ with commutation matrix $M$.
\end{lemma}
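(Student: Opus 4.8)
The plan is to exploit the congruence invariance of $rank(M)$ established in Lemma \ref{invar}, reducing $M$ to a canonical form from which a set of supplementary Paulis can be read off directly. Since $M$ is an alternating matrix over $\mathbb{Z}_2$ (symmetric with zero diagonal), I would first run a symplectic Gram--Schmidt procedure: repeatedly locate a pair of generators $i,j$ with $[M]_{ij}=1$, use the simultaneous row/column additions described in Lemma \ref{invar} (which correspond to composing generators) to clear all remaining entries in rows and columns $i$ and $j$, and set this hyperbolic pair aside. Iterating isolates $k=\tfrac12 rank(M)$ disjoint blocks $\left(\begin{smallmatrix}0&1\\1&0\end{smallmatrix}\right)$, while the leftover $dim(M)-2k$ generators span the radical and mutually commute. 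Throughout, I would accumulate the change-of-basis matrix $P$ over $\mathbb{Z}_2$ recording the compositions performed, so that $P^{\top} M P = J$ is the block-diagonal canonical form. This step is Gaussian-elimination-like and hence efficient.

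Next I would write supplementary Paulis directly for the canonical form $J$ using fresh registers. Each hyperbolic pair is realized by the anticommuting pair $(X,Z)$ placed on its own qubit, and each radical generator by a single $Z$ on its own qubit; operators supported on disjoint qubits automatically commute, so the commutation matrix of this canonical collection is exactly $J$, its $\phi$-rank is $dim(M)$, and it occupies $k+(dim(M)-2k)=dim(M)-\tfrac12 rank(M)$ registers, matching the lower bound of \cite{houshmand2012minimal}.

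Finally I would transport this construction back to the original generators. Expressing each original generator as the composition of canonical generators dictated by $P^{-1}$, and forming the corresponding products of the canonical supplementary Paulis, yields $\mathcal{P}^\cup$ for the original $\mathcal{P}$. Because $\phi$ turns composition into addition and the symplectic product is bilinear, this applies the inverse congruence to the commutation matrix, giving $M(\mathcal{P}^\cup)=(P^{-1})^{\top} J\, P^{-1}=M(\mathcal{P})$. By Remark \ref{commeq} each combined operator $\mathcal{P}^\cup_i\otimes_r\mathcal{P}_i$ then satisfies $\phi(\mathcal{P}^\cup_i\otimes_r\mathcal{P}_i)\odot\phi(\mathcal{P}^\cup_j\otimes_r\mathcal{P}_j)=M_{ij}+M_{ij}=0$, so the defining commuting property holds. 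Taking products never enlarges the support beyond the already-allocated registers, so the register count is preserved, and invertibility of $P$ preserves the $\phi$-rank.

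I expect the main obstacle to lie in the radical (isotropic) part rather than in the hyperbolic blocks: I must confirm that the mutually commuting generators can be realized on the claimed number of qubits while staying compositionally independent, noting that any two distinct nontrivial single-qubit Paulis anticommute, so a qubit hosts at most one independent commuting Pauli. Assigning each radical generator its own single-Pauli register settles this and makes the total agree with $dim(M)-\tfrac12 rank(M)$; verifying that this is simultaneously consistent with the hyperbolic construction and with exact preservation of both $M$ and $rank_\phi(\mathcal{P}^\cup)$ under the back-transformation is the delicate bookkeeping that the proof must make precise.
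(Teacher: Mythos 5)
Your proposal is correct and takes essentially the same approach as the paper's proof: both reduce the commutation matrix $M$ by simultaneous row/column additions (compositions of generators, justified by Lemma \ref{invar}) to the canonical form $\Tilde{D}$ consisting of a zero block and $\begin{pmatrix}0&1\\1&0\end{pmatrix}$ blocks, realize $\Tilde{D}$ with single-qubit $\{Z_i\}$ operators on the isotropic registers and anticommuting pairs $\{X_i,Z_i\}$ on the symplectic registers for $dim(M)-\frac{1}{2}rank(M)$ qubits total, and pull this canonical collection back through the accumulated change-of-basis matrix to recover $\mathcal{P}^\cup$ with $M(\mathcal{P}^\cup)=M$. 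The only difference is the pivoting order of the elimination---you extract hyperbolic pairs Gram--Schmidt style while the paper clears columns sequentially---which is immaterial to correctness, register count, or efficiency.
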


\begin{proof}
Let $dim(M)=d$. From Lemma \ref{invar} the composition of generators is equivalent to conjugation by elementary row additions. Upon composing generators we also retain the anti-symmetric nature of $M$. Begin with the first nonzero column in $M$ and add rows such that this column is given by $0100\ldots$. The first row will also be in this form. Repeat this procedure now with the second column, which will have a nonzero first entry. Continue in this way, performing Gaussian elimination, until the matrix is, up to permutations, given by:
\begin{equation}
    \Tilde{D}:=\left(\bigoplus_{i=0}^{dim(M)-rank(M)} [0]\right)\bigoplus \left(\bigoplus_{i=0}^{\frac{1}{2}rank(M)} \begin{bmatrix}
    0 & 1\\
    1 & 0
    \end{bmatrix}\right).
\end{equation}
Let $L^{-1}$ be the set of elementary operations performed to transform $M$ to $\Tilde{D}$--so that $M=L\Tilde{D}L^T$. A set of Pauli operators with commutativity matrix $\Tilde{D}$ is given by $\{Z_i\}_{i=1}^{dim(M)-rank(M)}$ and $\{X_i,Z_i\}_{i=dim(M)-rank(M)+1}^{dim(M)-\frac{1}{2}rank(M)}$. To recover a set of Paulis satisfying $M$ with the minimal number of registers, $\mathcal{P}^\cup$, we may simply compose these aforementioned generators' $\phi$ representations with $L$.
\end{proof}

Putting together the auxiliary results from the prior section and Lemma \ref{algo}, we have proven Theorem \ref{mainresult}.

\begin{corollary}
This result holds for qudits as well.
\end{corollary}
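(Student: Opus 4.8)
The plan is to observe that nearly every step of the qubit argument was written so as to survive replacing the base field $\mathbb{Z}_2$ by the finite field $\mathbb{F}_q$ of the relevant local dimension (prime $q$, or the prime-power case handled via the generalized Pauli group of \cite{ketkar2006nonbinary}), and then to flag the two places where $d=2$ was genuinely used and supply the odd-characteristic replacement. First I would set up the generalized symplectic representation $\phi:\mathbb{P}^n\mapsto\mathbb{F}_q^{2n}$, recording for each register the $\mathbb{Z}_d$-power of $X$ and of $Z$, with commutator $\phi(p_1)\odot\phi(p_2)=\vec{a}_1\cdot\vec{b}_2-\vec{b}_1\cdot\vec{a}_2 \bmod d$. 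Since $\phi$ is still a homomorphism onto an $\mathbb{F}_q$-vector space and $\odot$ is still bilinear, Lemma \ref{bases} carries over verbatim: a generating subset $s\subseteq\mathcal{P}$ still determines all commutation data.

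Next I would re-examine Lemma \ref{invar}. The commutation matrix $M(\mathcal{P})$ remains anti-symmetric, and in odd characteristic anti-symmetry already forces a zero diagonal (from $2[M]_{ii}=0$ with $2$ invertible), so the alternating structure used in the qubit proof is automatic. The rank-invariance claim is a statement about congruence $M\mapsto EME^T$ by elementary matrices, which preserves rank over any field; the only point needing care is that the simultaneous row/column operation keeps the diagonal zero. Adding $\lambda$ times (row $i$) to row $j$ together with $\lambda$ times (column $i$) to column $j$ changes $[M]_{jj}$ by $\lambda([M]_{ij}+[M]_{ji})+\lambda^2[M]_{ii}=0$ by anti-symmetry, so the structure and hence the rank are preserved. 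I would then invoke Remark \ref{commeq} in its stated general form, $\phi(\mathcal{P}_i)\odot\phi(\mathcal{P}_j)+\phi(\mathcal{P}^\cup_i)\odot\phi(\mathcal{P}^\cup_j)=0 \bmod d$, so that the target for the supplementary Paulis is $-M$ rather than $M$; note $-M$ has the same rank and the same congruence normal form, so the register count $\dim(M)-\tfrac12\mathrm{rank}(M)$ is unchanged (the rank of an anti-symmetric matrix is even, so the factor $\tfrac12$ is an integer even though $2$ is now invertible).

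Finally I would reprove Lemma \ref{algo} by running the same congruence-based Gaussian elimination, which is exactly the symplectic normal-form reduction for alternating matrices over $\mathbb{F}_q$: it produces $\tilde D$ as a direct sum of a zero block and $\tfrac12\mathrm{rank}(M)$ symplectic blocks, with $M=L\tilde D L^T$. Realizing $-\tilde D$ by the generalized Pauli operators --- $Z_i$ on the isotropic part and $\{X_i,Z_i\}$ pairs on each symplectic block, with powers chosen so that each block contributes commutator $\begin{bmatrix}0&-1\\1&0\end{bmatrix}$ --- and pushing these through $L$ yields $\mathcal{P}^\cup$ on $\dim(M)-\tfrac12\mathrm{rank}(M)$ registers. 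Combined with the qudit form of Theorem \ref{mainresult}, this establishes the corollary.

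I expect the main obstacle to be the characteristic split in Lemma \ref{invar} together with the sign in Remark \ref{commeq}: in the qubit case $-1=1$ collapsed the distinction between realizing $M$ and $-M$ and made anti-symmetric and symmetric coincide, whereas for odd $d$ one must genuinely appeal to the classical symplectic normal form for alternating matrices over a field and must build the supplementary generators to satisfy the \emph{negated} relation. The prime-power case adds only bookkeeping, since the requisite generalized Pauli group, symplectic form, and normal-form theory are already established in \cite{ketkar2006nonbinary}.
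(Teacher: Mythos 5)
Your proposal is correct and follows essentially the route the paper intends: the paper states this corollary without any explicit proof, relying on the fact that its lemmas were already phrased to survive the qudit generalization (the general form of Remark \ref{commeq}, and the note in Lemma \ref{invar} that the commutation matrices are anti-symmetric ``including the qudit case''). Your write-up simply makes that implicit argument explicit --- correctly identifying the sign issue (realizing $-M$ versus $M$) and the odd-characteristic normal form as the only points where $d=2$ mattered --- so it is, if anything, more complete than what the paper provides.
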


Note that this Lemma in essence re-proves the result from \cite{wilde2008optimal,brun2006correcting} in a modified language, circumventing the need to appeal to symplectic vector spaces and the Symplectic Graham-Schmidt Orthogonalization Procedure (SGSOP), instead working with \textit{nearly} standard Gaussian elimination. This also means that this is likely the computationally fastest routine that can exist for determining the $\mathcal{P}^\cup$ as well as the equivalent problems of finding the priorly shared entanglement in EAQECC and the memory operators in quantum convolutional codes. Additionally this proof provides a methodology for determining a satisfying set of Pauli operators, a piece that was lacking in prior works, as the reduction occurs through repeated row additions (compositions of the generators to be found) but does not require any Clifford operations.

\if{false}
\begin{proof}
The case of dimension $d=2$ is handled trivially: if rank $2$, then $\{X,Z\}$, else if rank zero then $\{Z_1, Z_2\}$ suffice.

Let $\text{dim}(M)=d\geq 3$. This proof is built by first considering the two extremes for the rank of the memory matrix. First, let the rank be zero, then the number of qubits in the memory operators is $d$. In this case, a sufficient set of memory operators is formed by $Z_i$, with $i\in [d]$, which is a single $Z$ Pauli acting on qubit $i$.

Next, consider the case of a nearly full rank memory matrix such that each column has at least a single $1$. Further, let us suppose that the matrix is hollow and tridiagonal with $1$ along the secondary diagonals. In this case the number of qubits is given by $\lceil \frac{d}{2}\rceil$. For this matrix it must be the case that $m_1$ and $m_2$ (the first two Pauli operators in $\mathcal{P}^\cup$) don't commute, so let's choose: $m_1=X_1$ and $m_2=Z_1Z_2$. It must also be the case that $m_2$ and $m_3$ don't commute. Picking $m_3=X_2$ suffices. We may continue in this way, alternating between a single $X$ operator and a pair of adjacent $Z$ operators to form a complete set of operators with this memory matrix. The only exception that must be made is that the last $Z$ operator must consist of a single $Z$ operator on the last qubit.

Now bring in Lemma \ref{invar}. Consider the action of composing memory operators. This is equivalent to adding that row \textit{and} column within the symplectic matrix to the other row \textit{and} column. And so with this hollow, tridiagonal matrix (with whatever binary rank) we may generate any memory matrix with the same rank. In effect any symmetric matrix with the same rank may be written as $L\Tilde{D} L^T$, for some matrix $L$ formed from the product of elementary matrix operations--only row swaps and row additions are needed here.

Lastly, for the general case, any memory matrix may be written in the form:
\begin{equation}
    M=\begin{bmatrix}
    0 & 0\\
    0^T & L\Tilde{D}L^T
    \end{bmatrix},
\end{equation}
where we have just reordered the rows (with their columns) into this form (which does not alter the operators, just their labels) so that all zero columns and rows are as the beginning of the matrix. Let's say the upperleft $0$ matrix has dimensions $b\times b$, the other $0$ has dimensions $b\times a$, and $L \Tilde{D}L^T$ has dimensions $a\times a$--with $a+b=d$.

With this representation, we select as the first $b$ memory operators $Z_i$. For the remaining $a$ operators we now select operators from the alternating set, composed over the qubits following the first $b$ such that we obtain the $L\Tilde{D}L^T$ block. This is achieved by taking these alternating operators and composing them by multiplication by $L$. From here, so long as the labelling is kept track of, the memory operators are obtained. 
\end{proof}

This result, on its own, is not surprising as this is equivalent to the fundamental theorem of symplectic algebra, which formed the basis for the proof of the number of memory qubits needed and entanglement qubits needed for entanglement-assisted quantum error-correcting codes (EAQECC). While the fundamental theorem of symplectic algebra was directly used in the case of showing EAQECC, the analogy between convolutional code memories and the priorly shared entanglement in EAQECC puts this result into clearer context. In the language of EAQECC this states that the generators, upon using "canonical codes", can be broken into an isotropic portion and a symplectic portion \cite{hsieh2008entanglement}. These portions, in quantum convolutional codes, are equivalent to the memory operators which are a single $Z$ operator and those which are linear combinations of the paired set. However, while the ability to generate the memory operators is not surprising, this provides a simple constructive method for generating the memory operators. When this result is recast within the problem we are solving at hand, it provides a perfectly obtainable lower-bound on the number of registers needed in the experimental device, although loses the clear interpretation of an isotropic subspace and a symplectic subspace. It is worth noting that these minimal register Pauli operators are also obtained from simple compositions of "elementary" Pauli operators. 

\begin{proof}[Proof of Theorem \ref{mainresult}]
Begin by reducing $\mathcal{P}$ to a generating set of operators $\mathcal{P}'$, which suffices as per Lemma \ref{bases} to find the complete set collection of operators. We will find a collection of supplementary Paulis to $\mathcal{P}'$, $\mathcal{P}'^\cup$. Since the commutator matrices for $\mathcal{P}'$ and $\mathcal{P}'^\cup$ are the same--$M(\mathcal{P}')=M(\mathcal{P}'^\cup)$--solving either will return equivalent Pauli operators. The number of registers needed to satisfy the commutation requirements of $M(\mathcal{P}'^\cup)$ is given by $dim(M)-\frac{1}{2} rank(M)$ (which is the same as $rank_\phi(\mathcal{P}^\cup)-\frac{1}{2}rank(M(\mathcal{P}^\cup))$), and the constructive proof from Lemma \ref{algo} will generate such a minimal collection of Pauli operators. Assuming all labels of Paulis are tracked, the weights of the supplementary Paulis will be those of the original Paulis, completing the proof and construction.
\end{proof}
\fi

To close this section we provide an example illustrating this result:
\begin{example}
We will take:
\begin{multline}
    \mathcal{P}=\{ZYZZXZXYXI, IIXYYZIYYY,\\ IYXIXIXYZY, YZZIXZZXYI, ZZZYIXYXXZ,\\ XZZIXIIXIZ, XXIYYIIYIX, IXIXIYIIYI\}.
\end{multline}
There are $8$ operators on $10$ qubits. From the $\phi$ representation, we know that we will need between $4$ and $8$ registers (inclusive) since those are the minimal rank while using both the $X$ and $Z$ components of the representation and the maximal rank when only needing one component (all operators commute). For these Paulis the commutator matrix is given by:
\begin{equation}
    M(\mathcal{P})=\begin{bmatrix}
    0 & 0 & 0 & 1 & 1 & 1 & 0 & 0\\
    0 & 0 & 0 & 1 & 1 & 0 & 1 & 0\\
    0 & 0 & 0 & 1 & 0 & 0 & 1 & 0\\
    1 & 1 & 1 & 0 & 0 & 1 & 0 & 0\\
    1 & 1 & 0 & 0 & 0 & 1 & 0 & 0\\
    1 & 0 & 0 & 1 & 1 & 0 & 0 & 1\\
    0 & 1 & 1 & 0 & 0 & 0 & 0 & 1\\
    0 & 0 & 0 & 0 & 0 & 1 & 1 & 0
    \end{bmatrix}
\end{equation}
This matrix has (binary) rank $6$, which from our formula says that we should need exactly $5$ registers to represent this set of commutation relations.

We perform the following operations to put the matrix into our pseudodiagonal form: add $5$ to $4$, then add $4$ to $7$, then add $7$ to $6$, then add $5$ to $6$, then add $8$ to $2$, then add $6$ to $1$, then add $6$ to $2$:
\begin{equation}
    \begin{bmatrix}
    0 & 0 & 0 & 0 & 0 & 0 & 0 & 0\\
    0 & 0 & 0 & 0 & 0 & 0 & 0 & 0\\
    0 & 0 & 0 & 1 & 0 & 0 & 0 & 0\\
    0 & 0 & 1 & 0 & 0 & 0 & 0 & 0\\
    0 & 0 & 0 & 0 & 0 & 1 & 0 & 0\\
    0 & 0 & 0 & 0 & 1 & 0 & 0 & 0\\
    0 & 0 & 0 & 0 & 0 & 0 & 0 & 1\\
    0 & 0 & 0 & 0 & 0 & 0 & 1 & 0
    \end{bmatrix}.
\end{equation}
\if{false} For ease we have reduced it to the case of a single nonzero entry per column, aside from the $0$ columns. We next add $3$ to $5$, then add $8$ to $6$, to obtain:
\begin{equation}
    \begin{bmatrix}
    0 & 0 & 0 & 0 & 0 & 0 & 0 & 0\\
    0 & 0 & 0 & 0 & 0 & 0 & 0 & 0\\
    0 & 0 & 0 & 1 & 0 & 0 & 0 & 0\\
    0 & 0 & 1 & 0 & 1 & 0 & 0 & 0\\
    0 & 0 & 0 & 1 & 0 & 1 & 0 & 0\\
    0 & 0 & 0 & 0 & 1 & 0 & 1 & 0\\
    0 & 0 & 0 & 0 & 0 & 1 & 0 & 1\\
    0 & 0 & 0 & 0 & 0 & 0 & 1 & 0
    \end{bmatrix}.
\end{equation}
\fi
This matrix is now in the reduced form and can be satisfied by $\{Z_1,Z_2,X_3,Z_3,X_4,Z_4,X_5,Z_5\}$. Then matrix $L$ which we conjugated by to obtain this form is given by the inverse of the product of these elementary row additions, providing:
\begin{equation}
    L=\begin{bmatrix}
    1 & 0 & 0 & 0 & 0 & 1 & 0 & 0\\
    0 & 1 & 0 & 0 & 0 & 1 & 0 & 1\\
    0 & 0 & 1 & 0 & 0 & 0 & 0 & 0\\
    0 & 0 & 0 & 1 & 1 & 0 & 0 & 0\\
    0 & 0 & 0 & 0 & 1 & 0 & 0 & 0\\
    0 & 0 & 0 & 0 & 1 & 1 & 1 & 1\\
    0 & 0 & 0 & 1 & 0 & 0 & 1 & 0\\
    0 & 0 & 0 & 0 & 0 & 0 & 0 & 1
    \end{bmatrix}.
\end{equation}

This satisfies $M(\mathcal{P})=L\Tilde{D}L^T$, so then a minimal register collection of Pauli operators can be given by:
\begin{equation}
    \mathcal{P}^\cup=\{Z_1Z_4,Z_2Z_4Z_5,X_3,Z_3X_4,X_4,Y_4Y_5,Z_3X_5,Z_5\}
\end{equation}
\if{false}
\begin{multline}
    \{Z_1Z_4Z_5Z_5,Z_2Z_4Z_5,X_3,X_3Z_3Z_4X_4,\\
    X_3X_4,X_3X_4Z_4Z_5X_5Z_5,Z_3Z_4X_5,Z_5\},
\end{multline}
which when simplified and expressed in terms of the full qubit Paulis gives:
\begin{multline}
    \mathcal{P}^\cup=\{Z_1Z_4,Z_2Z_4Z_5,X_3,Y_3Y_4,\\
    X_3X_4,X_3Y_4X_5,Z_3Z_4X_5,Z_5\},
\end{multline}
\fi
These Pauli operators on $5$ registers, with weight at most $3$, satisfy the same commutation matrix $M$ for $\mathcal{P}$, so these form a minimal register set of equivalent Pauli operators.
\end{example}

\section{Conclusion}

Pauli operators form the most fundamental building blocks in quantum computations, error-correction, and algorithms. In this work we have couched the discussion of the possible uses of this result in terms of near-term algorithms, however, it can be of rather general use, since knowing the minimal number of registers to represent a collection of commutation relations could be useful for a variety of settings. This work poses the question of what the minimal number of registers needed to represent a particular commutation relation is, then proceeds to solve this problem. This work proves an optimal register compression algorithm for a given set of commutation relations. In particular, this work imports a central idea from entanglement-assisted quantum error-correction and quantum convolutional codes into the broader framework of Pauli operators for any purpose. From the subset of examples considered thus far on small molecules, it seems that the Bravyi-Kitaev transformation achieves this optimal register usage \footnote{Andrew Jena, private communications}\cite{bravyi2002fermionic}. Is this always true and if so, can this be proven? Another future direction to carry this work would be to see whether this method, with appropriate extra considerations, can be used to prove lower-bounds on the weights of Pauli operators needed in order to obtain a collection of commutation relations, and how close a deterministic procedure can get to this minimal weight. Another important avenue to consider would be how the results are impacted by connectivity constraints of a physical device.

\section*{Acknowledgments}

We thank Andrew Jena for helpful discussions, David Cory for feedback, and Mark Wilde for convincing the author to look more closely at EAQECC and quantum convolutional codes.

\bibliographystyle{unsrt}
\phantomsection  
\renewcommand*{\bibname}{References}

\bibliography{main}

\end{document}